\newtheorem{thm}{Theorem}
\newtheorem{defn}{Definition}
\newtheorem{prop}{Proposition}
\def\QED{\mbox{\rule[0pt]{1.3ex}{1.3ex}}} 
\newenvironment{proof}{\quad {\it Proof.\,}}{\hfill \QED \par}
\newenvironment{proof-of}[1]{{\it Proof of #1:\,}}{\hfill\QED\par}
\newtheorem{rem}{Remark}
\newcommand{\R}{{\mathbb R}}
\renewcommand{\S}{{\mathbb S}}
\newcommand{\Rnn}{{\mathbb R}_{\ge 0}}
\newcommand{\Rp}{{\mathbb R}_{> 0}}
\newcommand{\cB}{{\mathcal B}}
\newcommand{\Hinf}{{\mathbb{H}_\infty}}
\newcommand{\cH}{{\mathcal H}}
\newcommand{\cJ}{{\mathcal J}}
\newcommand{\cM}{{\mathcal M}}
\newcommand{\cV}{{\mathcal V}}
\newcommand{\bfone}{{\boldsymbol 1}}
\newcommand{\diag}[1]{\textrm{diag}\left\{#1\right\}}
\renewcommand{\Re}{\mathrm{Re}}
\newcommand{\blkdiag}[1]{\textrm{blkdiag}\{#1\}}
\begin{document}
\title{Block-Diagonal Solutions to Lyapunov Inequalities and Generalisations of Diagonal Dominance
\author{Aivar Sootla, Yang Zheng and Antonis Papachristodoulou
\thanks{The authors are with the Department of Engineering Science, University of Oxford, Parks Road, Oxford, OX1 3PJ, U.K. e-mail: \{aivar.sootla, yang.zheng, antonis\}@eng.ox.ac.uk. A. Sootla and A. Papachristodoulou are supported by the EPSRC Grant EP/M002454/1. The authors would like to thank Dr James Anderson for valuable discussions. }}}

\maketitle 
\begin{abstract} Diagonally dominant matrices have many applications in systems and control theory. Linear dynamical systems with scaled diagonally dominant drift matrices, which include stable positive systems, allow for scalable stability analysis. For example, it is known that Lyapunov inequalities for this class of systems admit diagonal solutions. In this paper, we present an extension of scaled diagonally dominance to block partitioned matrices. We show that our definition describes matrices admitting block-diagonal solutions to Lyapunov inequalities and that these solutions can be computed using linear algebraic tools. We also show how in some cases the Lyapunov inequalities can be decoupled into a set of lower dimensional linear matrix inequalities, thus leading to improved scalability. We conclude by illustrating some advantages and limitations of our results with numerical examples.
\end{abstract}

\section{Introduction}
Systems admitting diagonal matrix solutions to Lyapunov inequalities are of particular interest in control theory, since they allow for a lower computational complexity of stability analysis. Necessary and sufficient conditions for the existence of diagonal solutions were derived in~\cite{carlson1992block}, which are, however, hard to check. On the other hand, it is well-known that stable linear systems invariant on the positive orthant (or positive systems) admit diagonal matrix solutions to Lyapunov inequalities~\cite{berman1994nonnegative}. Therefore, generalisations of positivity attracted some attention as well, e.g. eventual positivity~\cite{altafini2015realizations, sootla2015evpos}, which inherits some properties of positivity. However, in the context of Lyapunov inequalities, perhaps, a more relevant generalisation is based on (scaled) diagonally dominant matrices. These are defined through constraints on the absolute values of the individual entries of the matrix. Under some conditions scaled diagonally dominant drift matrices admit diagonal solutions to Lyapunov inequalities~\cite{hershkowitz1985lyapunov}, which can be computed using linear programming~\cite{sootla2016existence}.

A block generalisation of diagonal dominance can be obtained by partitioning the matrix into blocks and applying the diagonal dominance constraints to some norms of these blocks as in~\cite{feingold1962block}.
Although some authors considered block versions of scaled diagonal dominance~\cite{sootla2016existence,polman1987incomplete, xiang1998weak}, construction of block-diagonal solutions to Lyapunov inequalities was not fully addressed. In this paper, we present another block generalisation of scaled diagonal dominance. In comparison to previous works, our definition appears to be more suitable for stability analysis, since it includes a control theoretic concept of the $\Hinf$ norm. Our block generalisation of diagonal dominance is consistent with the network dissipativity results in~\cite{cook1974stability}. However, the derivation of block diagonal solutions to Lyapunov inequality was not addressed in~\cite{cook1974stability}.	

It is fairly computationally cheap to check if a matrix satisfies our definition of scaled diagonal dominance facilitating stability analysis of large-scale systems. We show that the introduced class of matrices admits block-diagonal solutions to Lyapunov inequalities, which can be constructed by solving a set of Riccati equations of smaller dimensions. This leads to reduced memory requirements and computational complexity. One can also replace Riccati equations with linear matrix inequalities of smaller dimensions (with respect to the Lyapunov inequality) and optimise over possible solutions. 

The rest of the paper is organised as follows. We cover the preliminaries of positive systems theory, scaled diagonal dominance and some facts from systems theory in Section~\ref{s:prel}. We introduce our extension of scaled diagonal dominance to block partitioned matrices in Subsection~\ref{ss:def}. 
We show how block-diagonal solutions to Lyapunov inequalities are constructed in Subsection~\ref{ss:sa}, and present decoupled stability tests based on our results in Subsection~\ref{ss:test}. We illustrate our results on numerical examples in Section~\ref{s:app} and conclude in Section~\ref{s:con}. The proofs of some auxiliary results are found in the Appendix.

\emph{Notation.} Let $\S^k_+$ (respectively, $\S^k_{++}$) denote the set of $k\times k$ positive semidefinite (respectively, positive definite) matrices in $\R^{k\times k}$. We also write $A\succeq 0$ if $A\in \S^k_+$, and $A\succ 0$ if $A\in \S^k_{++}$. We denote the positive orthant $\Rp^n$, that is, the set of all vectors $x$ with positive entries.
The operator $\cdot^\ast$ denotes a matrix transpose. We denote the maximal singular value of a matrix $A$ as $\overline{\sigma}(A)$, while the minimal as $\underline{\sigma}(A)$. The $\Hinf$ norm of an asymptotically stable transfer function $G(s)$ is computed as $\|G\|_\Hinf =\max_{w\in \R}\|G(\imath \omega)\|_2$, where $\imath$ is the imaginary unit and $\|A\|_2$ is the induced matrix norm equal to $\overline{\sigma}(A)$. Finally, let $\diag{A_1, \dots, A_n}$ denote a block-diagonal matrix with matrices $A_i$ on the block-diagonal.

\section{Preliminaries and Problem Formulation~\label{s:prel}}
Consider the linear time invariant system 
\begin{gather}
\label{eq:sys}
\begin{aligned}
\dot{x}(t)&=A x(t) + B u(t),\\ 
y(t) &= C x(t),
\end{aligned}
\end{gather}
with the transfer function $G(s) = C (sI - A )^{-1} B$, where $x(0)=x_0$, $x(t)\in \R^n$, $u(t) \in \R^m$, $y(t)\in \R^k$. It can be shown~\cite{ZDG} that system~\eqref{eq:sys} is stable with $u(t) = 0$ for all $t$ if and only if there exists $P\succ 0$ such that
\begin{equation}\label{eq:lyap_lmi}
P A + A^\ast P\prec 0.
\end{equation}
The linear matrix inequality (LMI)~\eqref{eq:lyap_lmi} is called a Lyapunov inequality, and its solution defines a Lyapunov function of the form $V(x)=x(t)^\ast P x(t)$ for  system~\eqref{eq:sys} with $u(t) = 0$.  
We will also use the following result from control theory called the Bounded Real Lemma~\cite{ZDG}.
\begin{prop}\label{prop:brl}
	For system~\eqref{eq:sys}, there exists~$\gamma$ such that $\gamma > \|G\|_\Hinf$ if and only if there exists $P\succ 0$ such that
	\begin{gather}
	P A + A^\ast P + P B B^\ast P + C^\ast C \gamma^{-2} \prec 0.
	\end{gather}
\end{prop}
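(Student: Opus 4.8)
The plan is to prove both directions via a frequency-domain characterisation of the $\Hinf$ norm and a completion-of-squares argument on the Riccati-type expression. First I would recall that, since $A$ is assumed asymptotically stable (the $\Hinf$ norm of $G$ is only defined in that case), $\|G\|_{\Hinf}<\gamma$ is equivalent, by the definition $\|G\|_{\Hinf}=\max_{\omega\in\R}\overline\sigma(G(\imath\omega))$, to the uniform bound $G(\imath\omega)^\ast G(\imath\omega)\prec\gamma^2 I$ for all $\omega\in\R$, i.e. $\gamma^{-2}G(\imath\omega)^\ast G(\imath\omega)-I\prec 0$. Substituting $G(s)=C(sI-A)^{-1}B$ and manipulating this inequality along the imaginary axis yields, after clearing the resolvent terms $(\imath\omega I-A)^{-1}$ and their conjugates, a rational matrix inequality whose positivity on the whole imaginary axis is the object to be matched with the LMI.

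For the sufficiency direction (LMI $\Rightarrow$ norm bound), suppose $P\succ0$ satisfies $PA+A^\ast P+PBB^\ast P+\gamma^{-2}C^\ast C\prec0$. I would take any $\omega\in\R$ and any nonzero input direction, form the vector $v=(\imath\omega I-A)^{-1}Bw$, and left/right multiply the LMI by $v^\ast$ and $v$. Using $(\imath\omega I-A)v=Bw$ one gets $v^\ast PA v+v^\ast A^\ast Pv = -v^\ast P B w - w^\ast B^\ast P v + (\text{purely imaginary terms that cancel})$; combined with the $PBB^\ast P$ term this completes a square of the form $-\|B^\ast Pv - w\|^2 + \|w\|^2$, and the $\gamma^{-2}C^\ast C$ term contributes $\gamma^{-2}\|Cv\|^2=\gamma^{-2}\|G(\imath\omega)w\|^2$. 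The LMI then forces $\gamma^{-2}\|G(\imath\omega)w\|^2<\|w\|^2$ for all $w\neq0$ and all $\omega$, hence $\overline\sigma(G(\imath\omega))<\gamma$ uniformly, i.e. $\|G\|_{\Hinf}<\gamma$.

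For the necessity direction (norm bound $\Rightarrow$ LMI), the standard route is to invoke the solvability of the associated algebraic Riccati equation: since $A$ is stable and $\|G\|_{\Hinf}<\gamma$, the Hamiltonian matrix $\begin{bmatrix} A & BB^\ast \\ -\gamma^{-2}C^\ast C & -A^\ast\end{bmatrix}$ has no eigenvalues on the imaginary axis, so the Riccati equation $PA+A^\ast P+PBB^\ast P+\gamma^{-2}C^\ast C=0$ admits a stabilising solution $P\succeq0$; perturbing $\gamma$ slightly (using that $\|G\|_{\Hinf}<\gamma$ is an open condition) converts the equality into the strict inequality $PA+A^\ast P+PBB^\ast P+\gamma^{-2}C^\ast C\prec0$, and a further small perturbation makes $P\succ0$. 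I expect the main obstacle to be this necessity direction: making the Hamiltonian/Riccati argument self-contained requires the no-imaginary-axis-eigenvalue lemma and the existence theory for the stabilising solution, which are somewhat heavy. Since the statement is explicitly attributed to the standard reference~\cite{ZDG}, I anticipate the paper will simply cite it rather than reproduce the proof, and I would do likewise, sketching only the completion-of-squares half as above for intuition.
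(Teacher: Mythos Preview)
Your anticipation is exactly right: the paper does not give any proof of Proposition~\ref{prop:brl}; it is stated in the preliminaries with a bare citation to~\cite{ZDG} and then used as a black box in the proof of Proposition~\ref{prop:comp-mat-lmi}. Your completion-of-squares computation for the sufficiency direction is correct (the imaginary terms do cancel and the square $\|B^\ast P v - w\|^2$ appears as you describe), and your outline of the necessity direction via the Hamiltonian having no imaginary-axis eigenvalues and the stabilising Riccati solution is the standard one from~\cite{ZDG}; there is nothing further to compare.
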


In some cases, we can guarantee the existence of a diagonal matrix $P$ satisfying~\eqref{eq:lyap_lmi}. One of such cases is the class of dynamical systems with \emph{Metzler} drift matrices (or positive systems).

\begin{defn}
	A matrix $A\in \R^{n\times n}$ is said to be  \emph{Metzler} if all the off-diagonal elements are nonnegative.
\end{defn}

Analysis of positive systems is computationally and conceptually simpler than analysis of general types of systems. For example, the following result (which is a combination of results in~\cite{fan1958topological, varga1976recurring, rantzer2015ejc}), allows one to replace semidefinite constraints in analysis and design methods with linear ones, which leads to scalable algorithms. 

\begin{prop}\label{prop:pos-stab}
	Consider a system $\dot x = A x$ with a Metzler matrix $A$. Then the following statements are equivalent:
	
	i) There exists $d \in \Rp^n$ such that $-A d \in \Rp^n$;  
	
	ii) There exists $e\in \Rp^n$ such that $-e^\ast A  \in \Rp^n$; 
	
	iii) $A$ is Hurwitz (has eigenvalues with negative real parts). 
	
	iv) There exists a diagonal $P$ such that $P A + A^\ast P \prec 0$.
\end{prop}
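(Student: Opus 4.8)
The plan is to prove the four-way equivalence by establishing a cycle of implications, exploiting the special structure of Metzler matrices throughout. The natural cycle is (iii) $\Rightarrow$ (i) $\Rightarrow$ (iv) $\Rightarrow$ (iii), together with a symmetric argument giving (iii) $\Rightarrow$ (ii), or equivalently noting that $A$ is Hurwitz if and only if $A^\ast$ is Hurwitz and $A^\ast$ is also Metzler, so (ii) follows from (i) applied to $A^\ast$. I would begin by recording the key structural fact: if $A$ is Metzler, then there exists a scalar $\tau > 0$ such that $\tau I + A$ is a nonnegative matrix, and by Perron--Frobenius theory its spectral radius $\rho(\tau I + A)$ is an eigenvalue with a nonnegative eigenvector. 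Consequently the eigenvalue of $A$ with largest real part is real, equals $\rho(\tau I + A) - \tau$, and admits a nonnegative (in the irreducible case, positive) eigenvector. This is the engine behind the whole proposition.

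For (iii) $\Rightarrow$ (i): assuming $A$ is Hurwitz, $-A^{-1}$ exists; since $-A = \tau I - (\tau I + A)$ with $\tau > \rho(\tau I + A)$ (because $A$ Hurwitz forces the dominant real eigenvalue negative), we can expand $(-A)^{-1} = \tau^{-1}\sum_{j\ge 0}(\tau^{-1}(\tau I + A))^j$, a convergent series of nonnegative matrices, hence $-A^{-1} \ge 0$ entrywise. Picking any strictly positive vector $v$ and setting $d = -A^{-1}v$ gives $d \ge 0$ with $-Ad = v \in \Rp^n$; a small perturbation argument (or invoking irreducibility/direct bookkeeping) upgrades $d$ to lie in $\Rp^n$. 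I would handle the possibly-reducible case by noting $d = -A^{-1}v \in \Rp^n$ already follows because $-A^{-1}v$ cannot have a zero coordinate when $-Ad = v > 0$ and $-A$ is Metzler-structured — a one-line contradiction argument on that coordinate's equation.

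For (i) $\Rightarrow$ (iv): given $d \in \Rp^n$ with $-Ad =: c \in \Rp^n$, let $D = \diag{d_1,\dots,d_n}$ and try $P = D^{-1}$. Then $PA + A^\ast P = D^{-1}A + A^\ast D^{-1}$, and I would instead use the classical Barker--Berman--Plemmons-style choice: set $P = \diag{e_1/d_1, \dots}$ where $e$ is the corresponding left vector from (ii); or, cleaner, show directly that with $P = \diag{p_i}$ chosen appropriately the matrix $PA + A^\ast P$ is strictly diagonally dominant with negative diagonal, hence negative definite. Concretely, the vector $d$ certifies that $A D \mathbf{1}$ has strictly negative entries where $\mathbf 1$ is all-ones; combining the row-certificate $d$ and the column-certificate $e$ (obtained by applying (i) to $A^\ast$), the choice $P = \diag{e_i/d_i}$ makes $PAD = $ a matrix whose symmetric part is diagonally dominant, and symmetrizing yields $PA + A^\ast P \prec 0$. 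For (iv) $\Rightarrow$ (iii): this is the standard Lyapunov direction — if $PA + A^\ast P \prec 0$ with $P \succ 0$ then $V(x) = x^\ast P x$ strictly decreases along trajectories, forcing all eigenvalues of $A$ into the open left half-plane; this holds for any $A$, not just Metzler ones, and needs no special structure.

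The main obstacle is the implication (i) $\Rightarrow$ (iv): constructing an explicit diagonal Lyapunov solution from the vector certificate. The delicate point is getting the scaling right so that the off-diagonal (Metzler, hence nonnegative) entries of $PA + A^\ast P$ are dominated by the negative diagonal entries; this requires simultaneously using both a right certificate $d$ and a left certificate $e$, and verifying a diagonal-dominance computation. I would isolate this as the core lemma, prove the Metzler/Perron--Frobenius facts first as a preliminary, then assemble the cycle; the equivalence of (i) and (ii) I would get for free by applying the (iii)$\Leftrightarrow$(i) result to the transpose, since $A$ Metzler $\iff$ $A^\ast$ Metzler and $A$ Hurwitz $\iff$ $A^\ast$ Hurwitz.
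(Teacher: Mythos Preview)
The paper does not prove Proposition~\ref{prop:pos-stab}; it states it as a known result, attributing it to a combination of \cite{fan1958topological, varga1976recurring, rantzer2015ejc}, and then uses it as a black box. There is therefore no ``paper's own proof'' to compare against.

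Your sketch is the standard route and is essentially sound, but the logical organisation has a small circularity you should repair. In the cycle (iii) $\Rightarrow$ (i) $\Rightarrow$ (iv) $\Rightarrow$ (iii), your step (i) $\Rightarrow$ (iv) consumes \emph{both} a right certificate $d$ and a left certificate $e$, and you obtain $e$ by ``applying (i) to $A^\ast$''. But starting from (i) for $A$ alone you have no reason yet that (i) holds for $A^\ast$: that is precisely (ii), which in your outline is derived from (iii). So as written the cycle is really (iii) $\Rightarrow$ (i), (iii) $\Rightarrow$ (ii), (i)$+$(ii) $\Rightarrow$ (iv), (iv) $\Rightarrow$ (iii), which establishes (iii) $\Leftrightarrow$ (iv) and (iii) $\Rightarrow$ (i),(ii), but never closes (i) $\Rightarrow$ (iii) or (ii) $\Rightarrow$ (iii). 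The patch is one line: with $D=\diag{d_1,\dots,d_n}$, the Metzler matrix $D^{-1}AD$ has strictly negative row sums, so Gershgorin puts every eigenvalue of $A$ in the open left half-plane, giving (i) $\Rightarrow$ (iii) directly; (ii) $\Rightarrow$ (iii) follows by transposition. Once you insert that, the classical choice $P=\diag{e_i/d_i}$ works exactly as you say: $E A D + D A^\ast E$ is symmetric Metzler with strictly negative row sums, hence strictly diagonally dominant with negative diagonal, hence negative definite, and conjugating by $D^{-1}$ gives $PA+A^\ast P\prec 0$.
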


The points (i) and (ii) in Proposition~\ref{prop:pos-stab} imply that Hurwitz Metzler matrices belong to another well-known class of matrices. 

\begin{defn}
	A matrix $A \in\R^{n\times n}$ with entries $a_{i j}$ is called \emph{strictly row scaled diagonally dominant} if there exist positive scalars $d_1, \dots, d_n$ such that:
	\begin{gather}
	d_i |a_{i i}| > \sum\limits_{j=1,j\ne i}^n d_j |a_{i j}|\,\,\forall i=1,\dots,n. \label{sdd:row}
	\end{gather}
The matrix $A$ is called \emph{strictly column scaled diagonally dominant} if there exist positive scalars $e_1, \dots, e_n$ such that:
 	\begin{gather}
 	e_i |a_{i i}| > \sum\limits_{j=1,j\ne i}^n e_j |a_{j i}|\,\,\forall i=1,\dots,n. \label{sdd:column}
 	\end{gather}
 The matrix $A$ is \emph{strictly row (respectively, column) diagonally dominant} if $d_i = 1$ (respectively, $e_i = 1$) for all $i$.  	
\end{defn}

In order to illustrate the connection between diagonal dominance and positivity we introduce the following concept. 
\begin{defn}\label{def:h-mat}
The matrix $\cM(A)$ is called the comparison matrix if its entries $\cM_{ij}(A)$ are defined as
\begin{gather}
\cM_{ij}(A) = \left\{\begin{array}{ll} -\max\{-a_{i i}, 0\} &  \text{if }i = j, \\
|a_{i j}| & \textrm{otherwise.}
\end{array} \right.\label{block-comparison-scalar}
\end{gather}
\end{defn}	

We slightly modified the definition of the comparison matrix compared to the classic one (cf.~\cite{hershkowitz1985lyapunov}) in order to streamline the stability analysis. 
For example, if $A$ is Metzler or $A$ is lower triangular ($a_{i j} =0$, for all $i< j$) then $A$ is Hurwitz if and only if $\cM(A)$ is Hurwitz.
More generally, if $\cM(A)$ is Hurwitz, then $A$ is Hurwitz and $A$ admits a diagonal solution to~\eqref{eq:lyap_lmi}~\cite{hershkowitz1985lyapunov}, which can be constructed using linear algebra, linear or second order cone programming~\cite{sootla2016existence}. In the proof of these results Proposition~\ref{prop:pos-stab} is applied to a Hurwitz Metzler matrix $\cM(A)$, which leads to existence of positive $d_i$, $e_i$ such that~\eqref{sdd:row} and~\eqref{sdd:column} hold, that is $A$ is strictly row and column scaled diagonally dominant. We, finally, note that the matrices with Hurwitz $\cM(A)$ belong to a well-studied class of matrices called \emph{$\cH$ matrices}. We will not discuss in detail this class of matrices, but refer the reader to~\cite{hershkowitz1985lyapunov}, \cite{polman1987incomplete} for details.

In this paper, we discuss a generalisation of scaled diagonal dominance to block partitioned matrices. We say that a matrix $A\in\R^{N\times N }$ has \emph{$\alpha=\{k_1, \dots, k_n\}$-partition} with $N = \sum\limits_{i = 1}^n k_i$, if the matrix $A$ is written as follows
\[
A = \begin{pmatrix}
A_{1 1}    & A_{1 2}     & \dots   & A_{1 n} \\
A_{2 1}    & A_{2 2}     & \dots   & A_{2 n} \\
\vdots     & \vdots      & \ddots  & \vdots  \\
A_{n 1}    & A_{n 2}     & \dots   & A_{n n} 
\end{pmatrix},
\]
where $A_{i j}\in\R^{k_i\times k_j}$. We say that $A$ is \emph{$\alpha$-diagonal} if it is $\alpha$-partitioned and $A_{i j} = 0$ if $i\ne j$.  
We aim at characterising \emph{$\alpha$-diagonally stable} matrices $A\in\R^{N\times N}$ such that there exists an $\alpha$-diagonal positive definite $X\in\R^{N\times N}$ satisfying~\eqref{eq:lyap_lmi}. If the partition is trivial, i.e., $\alpha = \{1,\dots,1\}=\bfone$, we will not mention $\alpha$ and say that an $\alpha$-diagonal (respectively, $\alpha$-diagonally stable) matrix $A$ is \emph{diagonal} (respectively, diagonally stable). We will also use a version of the \emph{Gershgorin circle} theorem for the $\alpha$-partitioned matrices.
\begin{prop}[\cite{feingold1962block}]\label{prop:block-gershgorin}
	For an $\alpha$-partitioned matrix $A\in\R^{N\times N}$, where $\alpha = \{k_1, \dots, k_n\}$ and $N = \sum_{i = 1}^n k_i$, every eigenvalue of $A$ satisfies 
	\begin{gather*}
	\|(\lambda I - A_{ i i})^{-1}\|_2^{-1} \le \sum\limits_{j = 1, j\ne i}^n \|A_{i j}\|_2
	\end{gather*}
	for at least one $i$ where $i = 1, \dots, n$. 
\end{prop}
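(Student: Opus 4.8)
The plan is to run the classical eigenvector argument behind Gershgorin's theorem, now at the level of blocks. Let $\lambda$ be an eigenvalue of $A$ and let $v \in \R^N$, $v \neq 0$, be a corresponding eigenvector. Partition $v$ conformally with $\alpha$ as $v = (v_1^\ast, \dots, v_n^\ast)^\ast$ with $v_i \in \R^{k_i}$, and write out the $i$-th block row of the identity $A v = \lambda v$ to obtain
\[
(\lambda I - A_{ii}) v_i \;=\; \sum_{j = 1,\, j \neq i}^n A_{ij} v_j, \qquad i = 1, \dots, n.
\]

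Next I would pick the block index $i$ for which $\|v_i\|_2$ is maximal; since $v \neq 0$, this maximum is strictly positive, so $\|v_i\|_2 > 0$ for that $i$. If $\lambda I - A_{ii}$ is singular, then $\|(\lambda I - A_{ii})^{-1}\|_2^{-1}$ is to be read as $\minsvd(\lambda I - A_{ii}) = 0$, and the asserted inequality holds trivially because its right-hand side is a sum of norms. Otherwise, invert $\lambda I - A_{ii}$, take $\|\cdot\|_2$ on both sides, and apply submultiplicativity of the induced $2$-norm together with the triangle inequality and the bound $\|v_j\|_2 \le \|v_i\|_2$ (valid for all $j$ by the choice of $i$):
\[
\|v_i\|_2 \;\le\; \|(\lambda I - A_{ii})^{-1}\|_2 \sum_{j \neq i} \|A_{ij}\|_2\, \|v_j\|_2 \;\le\; \|(\lambda I - A_{ii})^{-1}\|_2 \Big( \sum_{j \neq i} \|A_{ij}\|_2 \Big) \|v_i\|_2 .
\]
Dividing by $\|v_i\|_2 > 0$ and using $\|(\lambda I - A_{ii})^{-1}\|_2^{-1} = \minsvd(\lambda I - A_{ii})$ yields precisely the claimed bound for this index $i$.

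There is no genuine difficulty here; the only points that need a little care are the convention that $\|(\lambda I - A_{ii})^{-1}\|_2^{-1}$ denotes the smallest singular value $\minsvd(\lambda I - A_{ii})$ — equivalently $0$ when $\lambda \in \spec(A_{ii})$ — so that the statement is meaningful for every $i$, and the remark that maximising $\|v_i\|_2$ over a nonzero vector $v$ indeed selects a block with $v_i \neq 0$, which is the one place nontriviality of the eigenvector enters. It is worth noting that the familiar scalar Gershgorin theorem, and hence the sufficient condition for Hurwitzness of the comparison matrix $\cM(A)$ used above, is recovered as the special case $\alpha = \bfone$.
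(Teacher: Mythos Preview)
The paper does not supply its own proof of this proposition; it is quoted from Feingold and Varga~\cite{feingold1962block} and used as a black box in the proof of Proposition~\ref{prop:stab-comp}. Your argument is precisely the standard block-eigenvector proof from that reference and is correct, with one cosmetic slip: since $A$ is real but $\lambda$ may be complex, the eigenvector $v$ must be taken in $\C^N$ rather than $\R^N$; the rest goes through verbatim with the complex Euclidean norm.
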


\section{Generalisations of Diagonal Dominance} \label{s:sdd}
\subsection{$\alpha$-Comparison Matrix and its Properties}\label{ss:def}
We start by introducing a novel generalisation of the comparison matrix $\cM(A)$ to the block partitioned case.
\begin{defn}\label{def:block-comp-1}
	Given an $\alpha$-partitioned matrix $A$, we define the matrix $\cM^\alpha(A)$  as follows
	\begin{gather}
	\cM^\alpha_{ij}(A) = \left\{\begin{array}{ll} -\|(sI - A_{ i i})^{-1}\|_\Hinf^{-1} &  \text{if }i = j, \\
	\|A_{i j}\|_2 & \textrm{otherwise}.
	\end{array} \right.\label{block-comparison-1}
	\end{gather}	
\end{defn}

If $A_{i i}$ is not Hurwitz, then we can continuously extend the function $\|(sI  - A_{ i i})^{-1}\|_\Hinf^{-1}$ so that $\|(sI  - A_{ i i})^{-1}\|_\Hinf^{-1} = 0$. Therefore, Definition~\ref{def:block-comp-1} is well-posed. In~\cite{feingold1962block}, \cite{xiang1998weak}, a similar definition of $\cM^\alpha(A)$ is used, but $\|(sI - A_{ i i})^{-1}\|_\Hinf^{-1}$ is replaced by $\|A_{ i i}^{-1}\|_2^{-1}$ and for stability analysis it is required that $A_{i i}$ are Metzler and Hurwitz. Since $\|(sI - A_{ i i})^{-1}\|_\Hinf = \|A_{i i}^{-1}\|_2$ for Hurwitz Metzler matrices and we do not have any restrictions on $A_{i i}$ besides stability, our definition appears to be better suited for stability analysis. It is tempting to call the set of matrices such that the comparison matrices  $\cM^\alpha(A)$ are Hurwitz as block scaled diagonally dominant or block-$\cH$ matrix similarly to~\cite{feingold1962block}, \cite{xiang1998weak}. There are, however, several definitions of block-$\cH$ matrices and in order to minimise confusion we will resist of introducing new nomenclature.  

Now we will discuss the properties of our extension. If $\cM^\alpha(A)$ is Hurwitz then according to Proposition~\ref{prop:pos-stab} there exist positive scalars $d_i$, $e_i$ such that for all $i= 1,\dots, n$:
\begin{gather}
\|(sI - A_{i i})^{-1}\|_\Hinf^{-1} d_i > \sum\limits_{j=1,j\ne i}^n \overline{\sigma}(A_{i j}) d_j, \label{sdd-cond-1}\\ 
\|(sI - A_{i i})^{-1}\|_\Hinf^{-1} e_i  > \sum\limits_{j=1,j\ne i}^n \overline{\sigma}(A_{j i}) e_j.\label{sdd-cond-2} 
\end{gather}  
In the trivial partition case, i.e., $\alpha = \{1,~\dots,~1\}=\bfone$, we have $\|(sI  - a_{ i i})^{-1}\|_\Hinf^{-1} = \max\{-a_{i i}, 0\}$ and $a_{i i}$  is Hurwitz if and only if it is negative. Therefore, the $\alpha$-partitioned generalisation of the comparison matrix reduces to our previous definition. Furthermore, 
stability of the matrix $\cM^\bfone(A)$ ensures stability of the matrix $A$. In the $\alpha$-partitioned case, a similar statement can be made. 

\begin{prop} \label{prop:stab-comp}
An $\alpha$-partitioned matrix $A$ is Hurwitz, if $\cM^\alpha(A)$ is Hurwitz.
\end{prop}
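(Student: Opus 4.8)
The plan is to assume, for contradiction, that $A$ has an eigenvalue $\lambda$ with $\Re\lambda \ge 0$ and derive a violation of the scaled dominance inequality~\eqref{sdd-cond-1}. First I would record two preliminary observations. Since $\cM^\alpha(A)$ is a Metzler matrix, Proposition~\ref{prop:pos-stab} applied to it furnishes positive scalars $d_1,\dots,d_n$ for which~\eqref{sdd-cond-1} holds; moreover each diagonal block $A_{ii}$ must itself be Hurwitz, since otherwise the continuous extension sets $\|(sI-A_{ii})^{-1}\|_\Hinf^{-1}=0$ and the $i$-th instance of~\eqref{sdd-cond-1} would read $0 > \sum_{j\ne i}\overline{\sigma}(A_{ij})d_j \ge 0$, which is impossible. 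In particular no $\lambda$ with $\Re\lambda\ge 0$ is an eigenvalue of any $A_{ii}$, so every resolvent $(\lambda I - A_{ii})^{-1}$ appearing below is well defined.

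Next I would pass to the scaled matrix $\tilde A = D^{-1} A D$ with $D = \blkdiag{d_1 I_{k_1},\dots,d_n I_{k_n}}$. This similarity transformation preserves the spectrum, leaves the diagonal blocks unchanged ($\tilde A_{ii} = A_{ii}$), and replaces the off-diagonal blocks by $\tilde A_{ij} = (d_j/d_i)A_{ij}$, so that $\|\tilde A_{ij}\|_2 = (d_j/d_i)\overline{\sigma}(A_{ij})$. Applying the block Gershgorin theorem (Proposition~\ref{prop:block-gershgorin}) to $\tilde A$ and the eigenvalue $\lambda$, there is an index $i$ with
\[
\|(\lambda I - A_{ii})^{-1}\|_2^{-1} \le \sum_{j=1,\,j\ne i}^n \frac{d_j}{d_i}\,\overline{\sigma}(A_{ij}).
\]

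The key analytic ingredient is then that, for $\Re\lambda \ge 0$, one has $\|(\lambda I - A_{ii})^{-1}\|_2 \le \|(sI-A_{ii})^{-1}\|_\Hinf$: the matrix-valued map $s\mapsto (sI-A_{ii})^{-1}$ is analytic on the closed right half-plane (because $A_{ii}$ is Hurwitz) and tends to $0$ as $|s|\to\infty$, so by the maximum modulus principle its induced $2$-norm over the closed right half-plane is attained on the imaginary axis, where by definition it equals the $\Hinf$ norm. Consequently $\|(\lambda I - A_{ii})^{-1}\|_2^{-1} \ge \|(sI-A_{ii})^{-1}\|_\Hinf^{-1}$, and combining this with the Gershgorin bound and multiplying through by $d_i > 0$ yields $\|(sI-A_{ii})^{-1}\|_\Hinf^{-1} d_i \le \sum_{j\ne i}\overline{\sigma}(A_{ij})d_j$, contradicting~\eqref{sdd-cond-1}. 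Hence $A$ has no eigenvalue in the closed right half-plane, i.e.\ $A$ is Hurwitz.

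I expect the main obstacle to be the justification of the maximum-modulus step — specifically, being careful that $(sI-A_{ii})^{-1}$ really is analytic (no poles) throughout the closed right half-plane, which is exactly where Hurwitzness of the $A_{ii}$ (itself a consequence of~\eqref{sdd-cond-1}) enters, and that its norm decays at infinity so that the supremum is a boundary value; everything else is bookkeeping with the block structure, the diagonal scaling $D$, and Proposition~\ref{prop:block-gershgorin}. As an alternative that avoids the scaling, one could argue directly with an eigenvector $x=(x_1,\dots,x_n)$ of $A$ for $\lambda$: from $(\lambda I - A_{ii})x_i = \sum_{j\ne i}A_{ij}x_j$ one obtains $\|x_i\|_2 \le \|(sI-A_{ii})^{-1}\|_\Hinf \sum_{j\ne i}\overline{\sigma}(A_{ij})\|x_j\|_2$, so the nonnegative nonzero vector $v$ with entries $v_i = \|x_i\|_2$ satisfies $\cM^\alpha(A)v \ge 0$ componentwise, which is impossible for a Hurwitz Metzler matrix (multiply on the left by the positive row vector $e^\ast$ provided by Proposition~\ref{prop:pos-stab}(ii)).
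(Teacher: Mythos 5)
Your proof is correct and follows essentially the same route as the paper's: contradiction via the diagonal scaling $D=\diag{d_1 I_{k_1},\dots,d_n I_{k_n}}$, the block Gershgorin bound of Proposition~\ref{prop:block-gershgorin}, and the inequality $\|(\lambda I - A_{ii})^{-1}\|_2 \le \|(sI-A_{ii})^{-1}\|_\Hinf$ for $\Re\lambda\ge 0$. Your added justifications — that each $A_{ii}$ is necessarily Hurwitz (so the resolvents are defined and the maximum-modulus step applies) — are correct and make explicit what the paper leaves implicit; the eigenvector-based alternative you sketch at the end is also valid.
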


The proof can be found in the Appendix. In what follows we will show that stability of $\cM^\alpha(A)$ implies a stronger property of $A$, namely, $\alpha$-diagonal stability, a result which carries over from the case $\alpha = \bfone$. However, some of the properties of scaled diagonally dominant matrices are not preserved in our generalisation. 
\begin{prop}\label{prop:h-counter}
	There exists a matrix $A$ and a partition $\alpha$ such that $\cM^\alpha(A)$ is a Hurwitz matrix, however, for any $\alpha$-diagonal positive definite $P$ the matrix $\cM^\alpha(A^\ast P + P A)$ is not a Hurwitz matrix.
\end{prop}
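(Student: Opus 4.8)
The plan is to exhibit an explicit low-dimensional counterexample rather than argue abstractly. The natural choice is the smallest nontrivial case: take $n = 2$ blocks with $k_1 = k_2 = 2$, so $N = 4$ and $\alpha = \{2,2\}$. I would pick the diagonal blocks $A_{11}$ and $A_{22}$ to be Hurwitz but \emph{far from normal} — for instance rotation-like blocks of the form $\begin{pmatrix} -\varepsilon & \rho \\ -\rho & -\varepsilon \end{pmatrix}$ with $\rho \gg \varepsilon > 0$ — so that $\|(sI - A_{ii})^{-1}\|_{\Hinf}^{-1}$ is of order $\varepsilon$ (attained at $\omega = \pm\rho$), while the off-diagonal coupling blocks $A_{12}, A_{21}$ are chosen small enough, say of norm $\delta$ with $\delta < \varepsilon$, that the Metzler matrix $\cM^\alpha(A) = \begin{pmatrix} -\varepsilon & \delta \\ \delta & -\varepsilon\end{pmatrix}$ (diagonal entries being the $\Hinf$ reciprocals) is strictly diagonally dominant, hence Hurwitz. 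That settles the first clause of the statement.

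The second clause requires showing that for \emph{every} $\alpha$-diagonal $P = \diag{P_1, P_2} \succ 0$, the matrix $\cM^\alpha(A^\ast P + P A)$ fails to be Hurwitz. Here the key observation is that $\cM^\alpha$ applied to a \emph{symmetric} matrix $M = A^\ast P + P A$ has diagonal entries $-\|(sI - M_{ii})^{-1}\|_{\Hinf}^{-1}$; for a symmetric matrix $M_{ii}$ this reciprocal is exactly the distance from $0$ to $\spec(M_{ii})$ when $M_{ii} \prec 0$, and is $0$ as soon as $M_{ii}$ has a nonnegative eigenvalue. So it suffices to arrange that no choice of $P_1, P_2 \succ 0$ can make \emph{both} diagonal blocks $M_{11} = A_{11}^\ast P_1 + P_1 A_{11}$ and $M_{22} = A_{22}^\ast P_2 + P_2 A_{22}$ "sufficiently negative relative to the coupling." The cleanest route: show that the diagonal block $M_{ii}$ is \emph{itself} the $\{1,1\}$-partitioned Lyapunov matrix for $A_{ii}$ with weight $P_i$, and use that a rotation-dominated $A_{ii}$ forces $\|M_{ii}\|_2$ to be large whenever $M_{ii} \prec 0$: concretely, for $A_{ii} = \begin{pmatrix} -\varepsilon & \rho \\ -\rho & -\varepsilon\end{pmatrix}$ one computes that $P_i A_{ii} + A_{ii}^\ast P_i \prec 0$ forces the most negative eigenvalue of $M_{ii}$ to be no larger in magnitude than $2\varepsilon \,\overline\sigma(P_i)$, while the off-diagonal Lyapunov block $M_{12} = A_{21}^\ast P_2 + P_1 A_{12}$ has norm that cannot be made small simultaneously — because scaling $P_1$ down to shrink $M_{12}$ also shrinks $\|(sI-M_{11})^{-1}\|_{\Hinf}^{-1}$. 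The arithmetic should be rigged so that the ratio (off-diagonal coupling)/(diagonal margin) for the Lyapunov matrix is bounded below by a constant $> 1$ uniformly in $P$, so by the block Gershgorin criterion (Proposition~\ref{prop:block-gershgorin}) $\cM^\alpha(M)$ cannot be Hurwitz.

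The main obstacle is the uniform-over-$P$ quantification in the second clause: one must rule out \emph{all} block-diagonal $P$, including ill-conditioned ones, and in particular handle the interaction between the two scaling blocks $P_1, P_2$. I expect the decisive inequality to be a scaling-invariant lower bound of the form
\[
\frac{\overline\sigma\!\left(A_{21}^\ast P_2 + P_1 A_{12}\right)}{\|(sI - M_{11})^{-1}\|_{\Hinf}^{-1}} \;\ge\; c \;>\; 1
\]
for a constant $c$ depending only on $\varepsilon, \rho, \delta$, together with the symmetric bound with indices swapped; establishing this will be the crux. A convenient way to make it scaling-robust is to normalise $\overline\sigma(P_1) = 1$ without loss of generality (the Lyapunov condition and $\cM^\alpha$ are homogeneous in $P$ blockwise only up to the \emph{relative} scaling of $P_1$ versus $P_2$, so one genuine free parameter $t = \overline\sigma(P_2)/\overline\sigma(P_1)$ remains), and then check the inequality as a one-parameter calculus exercise in $t$, choosing $\rho/\varepsilon$ and $\delta/\varepsilon$ at the end to close the gap. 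The full constants would be deferred to a short explicit computation; the conceptual content is that the $\Hinf$-margin of a strongly non-normal Hurwitz block shrinks under Lyapunov weighting at least as fast as its coupling does, which no block-diagonal $P$ can overcome.
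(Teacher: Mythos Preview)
Your proposed construction does not work: the blocks $A_{ii}=\begin{pmatrix}-\varepsilon & \rho\\ -\rho & -\varepsilon\end{pmatrix}$ are \emph{normal} matrices (indeed $A_{ii}A_{ii}^\ast=A_{ii}^\ast A_{ii}=(\varepsilon^2+\rho^2)I$), not ``far from normal'' as you claim, and this normality is fatal. Take $P=I_4$. Then $M_{ii}=A_{ii}^\ast+A_{ii}=-2\varepsilon I_{2}$, so $\|(sI-M_{ii})^{-1}\|_{\Hinf}^{-1}=2\varepsilon$, while $M_{12}=A_{21}^\ast+A_{12}$ satisfies $\|M_{12}\|_2\le\|A_{12}\|_2+\|A_{21}\|_2\le 2\delta$. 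Hence
\[
\cM^\alpha(A^\ast P+PA)=\begin{pmatrix}-2\varepsilon & \|M_{12}\|_2\\ \|M_{12}\|_2 & -2\varepsilon\end{pmatrix}
\]
is Hurwitz whenever $\delta<\varepsilon$, which is precisely your standing hypothesis. So the second clause fails already for the simplest choice of $P$, regardless of how you pick the off-diagonal couplings. Your later heuristic (``the $\Hinf$-margin \ldots shrinks under Lyapunov weighting at least as fast as its coupling does'') is exactly wrong for normal $A_{ii}$: the identity weight achieves the optimal margin $2\varepsilon$, matching twice the $\Hinf$ reciprocal.

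The paper's proof uses the same $2\times 2$ block structure $A=\begin{pmatrix}B & \delta I\\ \delta I & B\end{pmatrix}$ but with a genuinely non-normal $B$ (concretely $B=\begin{pmatrix}-8&8\\5&-8\end{pmatrix}$, $\delta=1.63$). The decisive quantitative hypothesis is $\overline\sigma(X_0)\,\delta\ge 1/2$, where $X_0$ solves $X_0B^\ast+BX_0+I=0$; for your rotation block $X_0=I/(2\varepsilon)$, so this forces $\delta\ge\varepsilon$, contradicting the Hurwitz requirement on $\cM^\alpha(A)$. The paper then argues globally over all $P=\diag{X_1,X_2}$: if $\cM^\alpha(A^\ast P+PA)$ were Hurwitz, the two block inequalities sum to give $B^\ast(X_1+X_2)+(X_1+X_2)B+\overline\sigma(X_1+X_2)(\gamma+\gamma^{-1})\delta\,I\prec 0$ for some $\gamma>0$, and comparison with $X_0$ yields $1>\overline\sigma(X_0)(\gamma+\gamma^{-1})\delta\ge 2\overline\sigma(X_0)\delta\ge 1$, a contradiction. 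The uniform-in-$P$ step you flagged as the ``main obstacle'' is thus handled by summing the two blocks and invoking the AM--GM bound $\gamma+\gamma^{-1}\ge 2$, not by a one-parameter calculus optimisation.
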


The proof can be found in Appendix. This proposition seems to add just a minor detail, however, for Hurwitz $\cM^\bfone(A)$ and any diagonal solution to its Lyapunov inequality $P$ we have that $\cM^\bfone(P A + A^\ast P)$ is Hurwitz. This property was used to construct a diagonal $P$, therefore in what follows we need to find another technique for the $\alpha$-partitioned case.

\subsection{Computation of $\alpha$-diagonal Lyapunov Matrices} \label{ss:sa}  
We start by considering the following auxiliary result.

\begin{prop} \label{prop:comp-mat-lmi}
	Let $\cM^\alpha(A)$ be a Hurwitz matrix, then there exist $\gamma_{i j}\in \Rnn^n$, $W_{i j} \in \S_+^{k_i}$, $P_i \in \S_{+ +}^{k_i}$ such that
	\begin{gather}
	\begin{gathered}
	P_i A_{i i}  + A_{i i}^\ast P_i + \gamma_{i i} I_{k_i} + W_{i i} \preceq 0, \\
	\begin{pmatrix}
	W_{i j} & -P_i A_{i j} \\
	-A_{i j}^\ast P_i & \gamma_{i j} I_{k_j}
	\end{pmatrix} \succeq 0, \\
	\gamma_{i i} > \sum\limits_{j=1,j\ne i}^n \gamma_{j i},\,\,W_{i i}\succ \sum\limits_{j=1,j\ne i}^n W_{i j}.
	\end{gathered}\label{eq:scalable-lmi}
	\end{gather}
\end{prop}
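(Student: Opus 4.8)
The plan is to construct all the quantities explicitly from two classical ingredients. First, since the off-diagonal entries of $\cM^\alpha(A)$ are the nonnegative numbers $\overline{\sigma}(A_{ij})$, the matrix $\cM^\alpha(A)$ is Metzler; being also Hurwitz, Proposition~\ref{prop:pos-stab} supplies vectors $d,e\in\Rp^n$ for which the scaled-dominance inequalities~\eqref{sdd-cond-1}--\eqref{sdd-cond-2} hold. Moreover a Hurwitz Metzler matrix has negative diagonal (apply Proposition~\ref{prop:pos-stab}(i)), so each $\mu_i := \|(sI-A_{ii})^{-1}\|_\Hinf^{-1}$ is strictly positive; by the convention of Definition~\ref{def:block-comp-1} this means every $A_{ii}$ is Hurwitz, and hence the Bounded Real Lemma (Proposition~\ref{prop:brl}) is available for single-block systems with drift $A_{ii}$.

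The one nonroutine decision is the choice of the scalar ``flows''. I would set $\gamma_{ij} := \overline{\sigma}(A_{ij})\,e_i/d_j$ for $j\ne i$ (and $\gamma_{ij}=W_{ij}=0$ whenever $A_{ij}=0$). The point of this choice is that the two sums entering the construction decouple between~\eqref{sdd-cond-1} and~\eqref{sdd-cond-2}: a one-line computation gives $\sum_{j\ne i}\gamma_{ji}=\frac1{d_i}\sum_{j\ne i}\overline{\sigma}(A_{ji})e_j<\mu_i e_i/d_i$ from~\eqref{sdd-cond-2}, and $\sum_{j\ne i}\gamma_{ij}^{-1}\overline{\sigma}(A_{ij})^2=\frac1{e_i}\sum_{j\ne i}\overline{\sigma}(A_{ij})d_j<\mu_i d_i/e_i$ from~\eqref{sdd-cond-1}, so that the product of the two quantities is strictly below $\mu_i^2$ (the scaling constant cancels, so any positive multiple of this $\gamma_{ij}$ would do). I would then fix, for each $i$, a number $\epsilon_i>0$ small enough that still $\big(\sum_{j\ne i}\gamma_{ji}\big)\big(\sum_{j\ne i}\gamma_{ij}^{-1}\overline{\sigma}(A_{ij})^2+\epsilon_i\big)<\mu_i^2$, and put $\rho_i^2:=\sum_{j\ne i}\gamma_{ij}^{-1}\overline{\sigma}(A_{ij})^2+\epsilon_i>0$.

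Next I would apply the Bounded Real Lemma to the auxiliary system $\dot x_i=A_{ii}x_i+\rho_i I_{k_i}u_i$, $y_i=I_{k_i}x_i$, whose transfer function $\rho_i(sI-A_{ii})^{-1}$ has $\Hinf$ norm $\rho_i/\mu_i$. The displayed inequality above reads $\mu_i^2/\rho_i^2>\sum_{j\ne i}\gamma_{ji}$, so the interval $\big(\sum_{j\ne i}\gamma_{ji},\,\mu_i^2/\rho_i^2\big)$ is nonempty; choosing $\gamma_{ii}$ inside it and setting $\gamma_i:=\gamma_{ii}^{-1/2}>\rho_i/\mu_i$, Proposition~\ref{prop:brl} yields $P_i\in\S_{++}^{k_i}$ with $P_iA_{ii}+A_{ii}^\ast P_i+\rho_i^2P_i^2+\gamma_{ii}I_{k_i}\prec0$. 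Defining $W_{ii}:=\rho_i^2P_i^2\in\S_{++}^{k_i}$ then gives the first inequality of~\eqref{eq:scalable-lmi}, and $\gamma_{ii}>\sum_{j\ne i}\gamma_{ji}$ holds by construction.

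Finally, for $j\ne i$ with $A_{ij}\ne0$ I would set $W_{ij}:=\gamma_{ij}^{-1}P_iA_{ij}A_{ij}^\ast P_i\in\S_+^{k_i}$; since $\gamma_{ij}I_{k_j}\succ0$, a Schur-complement argument shows the $2\times2$-block inequality in~\eqref{eq:scalable-lmi} holds (its Schur complement is identically $0$). The last condition follows from $A_{ij}A_{ij}^\ast\preceq\overline{\sigma}(A_{ij})^2I_{k_i}$ and $P_i\succ0$: $\sum_{j\ne i}W_{ij}\preceq\big(\sum_{j\ne i}\gamma_{ij}^{-1}\overline{\sigma}(A_{ij})^2\big)P_i^2=(\rho_i^2-\epsilon_i)P_i^2=W_{ii}-\epsilon_iP_i^2\prec W_{ii}$. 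The step I expect to need genuine thought is the choice of the flows $\gamma_{ij}$ (after which everything reduces to the Bounded Real Lemma plus a Schur complement); the perturbations $\epsilon_i$ are precisely what turn $W_{ii}\succeq\sum_{j\ne i}W_{ij}$ into the required strict inequality and keep $W_{ii}$ positive definite even when row $i$ has no off-diagonal blocks.
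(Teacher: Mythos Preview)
Your proof is correct and follows essentially the same route as the paper: both choose $\gamma_{ij}=\overline\sigma(A_{ij})\,e_i/d_j$ from the scaling vectors of the Hurwitz Metzler matrix $\cM^\alpha(A)$, invoke the Bounded Real Lemma on each diagonal block $A_{ii}$, and then define the $W_{ij}$ via the Schur complement. The only cosmetic difference is that the paper applies Proposition~\ref{prop:brl} directly to the system $(A_{ii},\,\tilde A_{i,-i}\tilde\Gamma_i^{-1/2},\,I)$, so that $P_iBB^\ast P_i=\sum_{j\ne i}P_iA_{ij}A_{ij}^\ast P_i/\gamma_{ij}$ appears in the Riccati inequality at once, whereas you apply it to the coarser system $(A_{ii},\,\rho_i I,\,I)$ and afterwards use $A_{ij}A_{ij}^\ast\preceq\overline\sigma(A_{ij})^2 I$ to dominate $\sum_{j\ne i}W_{ij}$ by $W_{ii}=\rho_i^2P_i^2$; your bookkeeping with $\epsilon_i$ is a bit more explicit about how the strict inequality $W_{ii}\succ\sum_{j\ne i}W_{ij}$ is obtained.
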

\begin{proof}
	The proof is constructive and we find explicitly $P_i$, $W_{i j}$ and $\gamma_{i j}$ satisfying LMIs~\eqref{eq:scalable-lmi}. Let $A$ be $\alpha$-partitioned and let $\cM^\alpha(A)$ be a Hurwitz matrix, which implies that there exist positive scalars $d_i$, $e_i$ such that~\eqref{sdd-cond-1},~\eqref{sdd-cond-2} hold. Let also 
	\begin{multline} 
	A_{i, -i} =\begin{pmatrix}
	A_{i, 1} & \cdots & A_{i, i-1} & A_{i, i+1} & \cdots & A_{i, n}
	\end{pmatrix}\\ 
	\Gamma_i = \diag{\begin{pmatrix}
		\gamma_{i, 1} I& \cdots &\gamma_{i, i-1} I &\gamma_{i, i+1} I & \cdots &\gamma_{i, n} I
		\end{pmatrix}}\\
	\text{where } \gamma_{i j} = \begin{cases} 
	\|(sI - A_{i i})^{-1}\|_\Hinf^{-1} e_i / d_i & \text{ if } i= j \\
	\overline\sigma (A_{i j}) e_i /d_j   & \text{ otherwise}\\
	\end{cases}. \label{mat-def}
	\end{multline}
	The scalars $\gamma_{i j}$ are equal to zero, if $A_{i j}$ is. Therefore, we introduce matrices $\tilde \Gamma_i$ and $\tilde A_{i,-i}$, which are obtained by removing all zero blocks from $\Gamma_i$ and $A_{i,-i}$. Let $\cJ_i = \{j \in [1,\dots, n] | \gamma_{i j} \ne 0, j\ne i \}$. We have that
	\begin{multline}
	\|\tilde A_{i, -i} \tilde\Gamma_i^{-1/2}\|_2^2 =  \overline{\sigma}\left(\sum_{j\in \cJ_i} A_{i j} A_{i j}^\ast/\gamma_{i j} \right) \le \\
	\sum_{j \in \cJ_i} \overline{\sigma}\left( A_{i j} A_{i j}^\ast \right)/\gamma_{i j} = 
	\sum_{j \in \cJ_i} (\overline{\sigma}(A_{i j}))^2/\gamma_{i j} = \\
	\sum_{j=1,j\ne i}^n \overline{\sigma}(A_{i j}) d_j/e_i <
	\|(sI - A_{i i})^{-1}\|_\Hinf^{-1} d_i /e_i.\label{ineq:main}
	\end{multline}
	
	Using inequality~\eqref{ineq:main} we can obtain further bounds
	\begin{multline*}
	\|(sI - A_{i i})^{-1} \tilde A_{i, -i} \tilde \Gamma_i^{-1/2}\|_\Hinf^2 \le \\
	\|(sI - A_{i i})^{-1}\|_\Hinf^2 \|\tilde A_{i, -i} \tilde \Gamma_i^{-1/2}\|_2^2  <  \\
	\|(sI - A_{i i})^{-1}\|_\Hinf d_i/e_i = \gamma_{i i}^{-1}.
	\end{multline*}
	This according to Proposition~\ref{prop:brl} implies that for some $P_i\succ 0$
	\begin{gather*}
	P_i A_{i i}  +  A_{i i}^\ast P_i + \gamma_{i i} I  + P_i \tilde A_{i,-i} \tilde \Gamma_i^{-1} \tilde A_{i,-i}^\ast P_i \prec 0.
	\end{gather*}
	By noticing that $\gamma_{ii} > \sum_{j=1,j\ne i}^n \gamma_{j i}$  and introducing new variables $W_{i j} \succeq P_i A_{i j} A_{ij}^\ast P_i/\gamma_{i j}$, we obtain~\eqref{eq:scalable-lmi}. 
\end{proof}

There is a certain dimensional asymmetry in the seemingly related variables $W_{i j}$ (which is a matrix), $\gamma_{i j}$ (which is a scalar) in~\eqref{eq:scalable-lmi}. Actually, if the main goal in mind is stability analysis, we can relax the conditions~\eqref{eq:scalable-lmi} and consider the following LMIs with $P_i \in \S_{++}^{k_i}$, $W_{i j}, V_{j i} \in \S_+^{k_i}$.
\begin{subequations}
\begin{gather}
P_i A_{i i}  + A_{i i}^\ast P_i + V_{i i} + W_{i i} \preceq 0, \label{eq:gen-scalable-lmia}
\\
\begin{pmatrix}
W_{i j} & -P_i A_{i j} \\
-A_{i j}^\ast P_i & V_{i j}
\end{pmatrix} \succeq 0, \label{eq:gen-scalable-lmib}
\\
V_{i i} \succ \sum_{j=1,j\ne i}^n V_{j i},\,\,W_{i i}\succ \sum_{j=1,j\ne i}^n W_{i j},\label{eq:gen-scalable-lmic}
\end{gather}
\label{eq:gen-scalable-lmi}
\end{subequations}
which leads to the main theoretical result of the paper.

\begin{thm} \label{lem:lmi}
	Let an $\alpha$-partitioned $A$ satisfy~\eqref{eq:gen-scalable-lmi} for some $P_i \in \S_{++}^{k_i}$, $W_{i j}, V_{j i} \in \S_+^{k_i}$, then  the matrix $A$ is $\alpha$-diagonally stable. Furthermore, $P A + A^\ast P \prec 0$ with $P = \diag{P_1,\dots,P_n}\succ 0$. 
\end{thm}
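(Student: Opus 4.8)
The strategy is to show directly that $P = \diag{P_1,\dots,P_n}$ certifies $PA + A^\ast P \prec 0$, by assembling the block inequalities in~\eqref{eq:gen-scalable-lmi} into a global one. First I would write $Q := PA + A^\ast P$ in block form: the $(i,i)$-block is $P_i A_{ii} + A_{ii}^\ast P_i$ and the $(i,j)$-block (for $i\ne j$) is $P_i A_{ij} + A_{ji}^\ast P_j$. Using~\eqref{eq:gen-scalable-lmia}, the diagonal blocks are bounded above by $-(V_{ii} + W_{ii})$. The key idea is then a Schur-complement / Young's-inequality step applied blockwise: inequality~\eqref{eq:gen-scalable-lmib} says $\begin{pmatrix} W_{ij} & -P_i A_{ij}\\ -A_{ij}^\ast P_i & V_{ij}\end{pmatrix}\succeq 0$, which, for any vectors $x_i\in\R^{k_i}$, $x_j\in\R^{k_j}$, yields $x_i^\ast W_{ij} x_i + x_j^\ast V_{ij} x_j \ge 2\, x_i^\ast(-P_i A_{ij}) x_j = -x_i^\ast P_i A_{ij} x_j - x_j^\ast A_{ij}^\ast P_i x_i$. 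Summing this cross-term bound over all off-diagonal pairs and combining with the diagonal bound should give $x^\ast Q x \le \sum_i x_i^\ast\big(-V_{ii}-W_{ii} + \sum_{j\ne i} W_{ij}\big)x_i + \sum_i x_i^\ast\big(\sum_{j\ne i} V_{ji}\big) x_i$, where I have collected the $V_{ij}$-terms by relabelling indices (the $x_j^\ast V_{ij} x_j$ term from pair $(i,j)$ contributes to the $j$-th diagonal slot). The strict inequalities~\eqref{eq:gen-scalable-lmic} then make each bracketed coefficient negative definite, so $x^\ast Q x < 0$ for all $x\ne 0$, i.e. $Q\prec 0$.

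Concretely, I would carry out the steps in this order: (1) expand $x^\ast(PA+A^\ast P)x = \sum_{i}\big(x_i^\ast P_i A_{ii}x_i + \text{c.c.}\big) + \sum_{i\ne j}\big(x_i^\ast P_i A_{ij} x_j + \text{c.c.}\big)$, being careful that the full off-diagonal contribution from the unordered pair $\{i,j\}$ is $x_i^\ast P_i A_{ij}x_j + x_j^\ast A_{ij}^\ast P_i x_i + x_j^\ast P_j A_{ji}x_i + x_i^\ast A_{ji}^\ast P_j x_j$; (2) apply~\eqref{eq:gen-scalable-lmib} for the pair $(i,j)$ to bound the first two of these by $x_i^\ast W_{ij}x_i + x_j^\ast V_{ij}x_j$, and~\eqref{eq:gen-scalable-lmib} for $(j,i)$ to bound the last two by $x_j^\ast W_{ji}x_j + x_i^\ast V_{ji}x_i$; (3) substitute~\eqref{eq:gen-scalable-lmia} for the diagonal blocks; (4) regroup everything as a sum over $i$ of $x_i^\ast M_i x_i$ with $M_i = -V_{ii} - W_{ii} + \sum_{j\ne i}(W_{ij} + V_{ji})$; (5) invoke~\eqref{eq:gen-scalable-lmic} to conclude $M_i\prec 0$ for each $i$, hence $Q\prec 0$. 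Finally, $P\succ 0$ because each $P_i\succ 0$, so $A$ is $\alpha$-diagonally stable by definition.

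The main point requiring care — more bookkeeping than genuine obstacle — is the index accounting in steps~(2) and~(4): one must make sure each off-diagonal term of $PA+A^\ast P$ is used exactly once, that the inequality~\eqref{eq:gen-scalable-lmib} is applied with the correct ordered index pair each time, and that the $V_{ij}$-contributions land in the right diagonal slot after relabelling (this is why the statement is phrased with $V_{ji}\in\S_+^{k_i}$ rather than $V_{ij}$). A minor subtlety is that~\eqref{eq:gen-scalable-lmib} only gives a non-strict bound on the cross terms, so the strictness of the final conclusion comes entirely from the strict inequalities in~\eqref{eq:gen-scalable-lmic}; combined with~\eqref{eq:gen-scalable-lmia} being non-strict, the argument still closes because $M_i$ inherits strict negative definiteness from~\eqref{eq:gen-scalable-lmic}. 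No appeal to Proposition~\ref{prop:comp-mat-lmi} or the comparison matrix is needed here; that machinery is only required to show such $P_i$, $W_{ij}$, $V_{ij}$ exist when $\cM^\alpha(A)$ is Hurwitz, which is a separate implication.
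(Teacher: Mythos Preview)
Your proposal is correct and follows essentially the same route as the paper's proof: the paper writes the identical decomposition at the matrix level using selector matrices $R_{ii}, R_{ij}$, splitting $PA+A^\ast P$ into a sum of the block-diagonal pieces $P_iA_{ii}+A_{ii}^\ast P_i+V_{ii}+W_{ii}\preceq 0$, the $2\times2$ block pieces $\begin{pmatrix}-W_{ij}&P_iA_{ij}\\A_{ij}^\ast P_i&-V_{ij}\end{pmatrix}\preceq 0$, and the strictly negative block-diagonal remainder $-(V_{ii}+W_{ii})+\sum_{j\ne i}(W_{ij}+V_{ji})\prec 0$ --- exactly your $M_i$. Your quadratic-form bookkeeping is just the scalar transcription of that matrix identity, so the two arguments coincide.
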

\begin{proof}
	Let $R_{i j} = \begin{pmatrix}
	R_{i i} &
	R_{j j}
	\end{pmatrix}$ for $i\ne j$, where $R_{i i}\in \R^{N \times k_i}$ partitioned into blocks of the size $k_i \times k_j$ for all $j = 1,\dots,n$. All the blocks are zero matrices, except for the $i$'s block entry, which is an identity matrix. We have the following decomposition:
	\begin{multline*}
	P A + A^\ast P = \sum\limits_{i = 1}^n \Bigl(R_{i i}(P_i A_{i i}  +  A_{i i}^\ast P_i + V_{i i}  + W_{i i}) R_{i i}^\ast - \\
	- R_{i i}(V_{i i} + W_{i i} -\sum\limits_{j=1,j\ne i}^n (W_{i j} +V_{j i}) )R_{i i}^\ast\Bigl) + \\
	\sum\limits_{i = 1}^n \sum\limits_{j=1,j\ne i}^n R_{i j}\begin{pmatrix}
	-W_{i j} & P_i A_{i j} \\
	A_{i j}^\ast P_i & -V_{i j}
	\end{pmatrix}R_{i j}^\ast.
	\end{multline*}
	It is straightforward to show that $\sum_{i = 1}^n R_{i i} (- V_{i i} + \sum_{j=1,j\ne i}^n V_{j i}) R_{i i}^\ast$, $\sum_{i = 1}^n R_{i i} (- W_{i i} + \sum_{j=1,j\ne i}^n W_{i j}) R_{i i}^\ast$ are negative definite, while other sums are negative semidefinite, therefore $P A + A^\ast P \prec 0$. 
\end{proof}	

This result also implies that a matrix $A$ is $\alpha$-diagonally stable provided that $\cM^\alpha(A)$ is Hurwitz. The class of matrices satisfying LMIs~\eqref{eq:gen-scalable-lmi} can be seen as a generalisation of matrices with a stable $\cM(A)$ in their own right. 
Let $\alpha = \bfone$ and $A = \{a_{i j}\}_{i,j =1}^n$, then constraints~\eqref{eq:gen-scalable-lmi} are simplified to
	\begin{gather}
	\begin{gathered}
	-a_{i i} \ge \frac{w_{i i} + v_{i i}}{2 p_i}, \,\, |a_{i j}| \le \frac{\sqrt{w_{i j} v_{i j}}}{p_i}, \\	
	w_{i i} > \sum\limits_{j=1,j\ne i}^n w_{i j}, \,\,  v_{i i} >  \sum\limits_{j=1,j\ne i}^n v_{j i}.
	\end{gathered}\label{eq:lin-cond}
	\end{gather}
\begin{prop}
	The matrix $A$ satisfies~\eqref{eq:lin-cond} if and only if $\cM(A)$ is Hurwitz.
\end{prop}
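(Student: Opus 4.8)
The plan is to prove the two directions separately, exploiting the fact that in the scalar partition the matrix LMIs of~\eqref{eq:gen-scalable-lmi} collapse to the scalar conditions~\eqref{eq:lin-cond}, and that~\eqref{eq:lin-cond} is essentially a rescaled version of the strict scaled diagonal dominance inequalities~\eqref{sdd:row}--\eqref{sdd:column} applied to the comparison matrix $\cM(A)$.

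For the ``if'' direction, suppose $\cM(A)$ is Hurwitz. Since $\cM(A)$ is Metzler (its off-diagonal entries are $|a_{ij}|\ge 0$), Proposition~\ref{prop:pos-stab} yields positive vectors $d$ and $e$ with $-\cM(A)d\in\Rp^n$ and $-e^\ast\cM(A)\in\Rp^n$; writing these out gives, for every $i$,
\[
-a_{ii}\,d_i \;>\; \max\{-a_{ii},0\}\,d_i \;\geq\; \sum_{j\ne i}|a_{ij}|\,d_j,\qquad
-a_{ii}\,e_i \;>\; \sum_{j\ne i}|a_{ji}|\,e_j,
\]
(the subtlety being that Hurwitzness forces $-\cM_{ii}(A)>0$, hence $a_{ii}<0$ and $\cM_{ii}(A)=a_{ii}$). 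I would then exhibit an explicit feasible point of~\eqref{eq:lin-cond}: set $p_i=e_i/d_i$, choose $w_{ij}$ so that $\sum_{j\ne i}w_{ij}<w_{ii}$ with each $w_{ij}$ dominating $p_i|a_{ij}|\cdot(\text{slack})$, and symmetrically choose $v_{ij}$ using the column dominance with weights $e_i$. Concretely one can mimic~\eqref{mat-def}: take $w_{ij}=p_i\,|a_{ij}|\,d_j/(\text{something})$ balanced against $v_{ij}=p_i|a_{ij}|$ so that $\sqrt{w_{ij}v_{ij}}/p_i=|a_{ij}|$ exactly, while $w_{ii},v_{ii}$ are set to $-a_{ii}p_i$ (so the first inequality of~\eqref{eq:lin-cond} holds with equality, or one can split to make it strict) and the row/column sum conditions follow from~\eqref{sdd:row}--\eqref{sdd:column}. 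A small amount of care is needed to make all three inequalities strict simultaneously, which one gets from the strictness already present in the scaled dominance inequalities.

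For the ``only if'' direction, suppose $A$ satisfies~\eqref{eq:lin-cond} with some positive $p_i$, $w_{ij}$, $v_{ij}$. From $|a_{ij}|\le \sqrt{w_{ij}v_{ij}}/p_i$ and the AM--GM-type manipulation, together with $-a_{ii}\ge (w_{ii}+v_{ii})/(2p_i)$, I would derive that the vector $d$ with $d_i=\sqrt{w_{ii}}\,/\sqrt{p_i}$ (or a similar choice) satisfies $-\cM(A)d\in\Rp^n$: indeed $\sum_{j\ne i}|a_{ij}|d_j \le \sum_{j\ne i}\sqrt{w_{ij}v_{ij}}\,d_j/p_i$, and one bounds this using $w_{ii}>\sum_{j\ne i}w_{ij}$, $v_{jj}>\sum_{j\ne i}v_{ji}$ and Cauchy--Schwarz to get something strictly below $-a_{ii}d_i$. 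Then $\cM(A)$ is a Metzler matrix admitting a positive vector $d$ with $-\cM(A)d\gg 0$, so by Proposition~\ref{prop:pos-stab} it is Hurwitz. Alternatively, and more cleanly, I would just invoke Theorem~\ref{lem:lmi} specialised to $\alpha=\bfone$ to conclude $A$ is diagonally stable, combined with Proposition~\ref{prop:stab-comp}'s scalar converse — but since that converse is exactly what we are proving, the self-contained route via Proposition~\ref{prop:pos-stab} and the explicit $d$ is preferable.

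The main obstacle is the bookkeeping in choosing the auxiliary scalars so that \emph{all} inequalities in~\eqref{eq:lin-cond} are strict at once in the forward direction, and conversely extracting a single positive vector $d$ that certifies Hurwitzness of $\cM(A)$ from the coupled $w$- and $v$-inequalities — the $w$'s naturally give row dominance and the $v$'s give column dominance, so one must check that a common rescaling works. I expect this to reduce to an elementary Cauchy--Schwarz or weighted-AM--GM estimate once the right substitutions (essentially $d_i\sim\sqrt{w_{ii}/p_i}$, $e_i\sim\sqrt{v_{ii}/p_i}$) are made, so there is no deep difficulty, only a careful chain of inequalities.
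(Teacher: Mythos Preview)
Your ``if'' direction follows the paper's approach: obtain positive $d,e$ from Proposition~\ref{prop:pos-stab}, set $p_i=e_i/d_i$, and write down $w_{ij},v_{ij}$ so that $\sqrt{w_{ij}v_{ij}}/p_i=|a_{ij}|$ while the two sum conditions reduce to the row and column scaled dominance of $\cM(A)$. Your concrete suggestion $v_{ij}=p_i|a_{ij}|$ is not quite right (with this choice the column condition $v_{ii}>\sum_{j\ne i}v_{ji}$ becomes $-a_{ii}e_i/d_i>\sum_{j\ne i}|a_{ji}|e_j/d_j$, which does not follow from $-a_{ii}e_i>\sum_{j\ne i}|a_{ji}|e_j$), but the fix is minor: the paper takes $w_{ij}=|a_{ij}|e_id_j/d_i^2$ and $v_{ij}=|a_{ij}|e_i/d_j$, and then all four conditions in~\eqref{eq:lin-cond} check directly.

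The ``only if'' direction has a genuine gap. The obstacle you correctly identify --- extracting a single scaling vector $d$ from a row condition on $w$ and a column condition on $v$ --- is exactly the crux, and your explicit guesses $d_i\sim\sqrt{w_{ii}/p_i}$ do not work. With any such choice, the AM--GM step leaves you needing to bound a \emph{row} sum $\sum_{j\ne i}v_{ij}(\cdot)$, whereas the hypothesis $v_{ii}>\sum_{j\ne i}v_{ji}$ only controls \emph{column} sums. The paper closes this gap by a second, less obvious invocation of Proposition~\ref{prop:pos-stab}: the Metzler matrix $\cV$ with diagonal $-v_{ii}$ and off-diagonal $v_{ij}$ is strictly column diagonally dominant, hence Hurwitz, hence by the equivalence in Proposition~\ref{prop:pos-stab} there exist positive $d_i$ with $v_{ii}d_i^2>\sum_{j\ne i}v_{ij}d_j^2$ --- this converts the column information into precisely the row-type bound you need. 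With \emph{this} (non-explicit) $d$, AM--GM gives
\[
|a_{ij}|\,\frac{d_j}{d_i}\le \frac{1}{p_i}\sqrt{w_{ij}\cdot v_{ij}\frac{d_j^2}{d_i^2}}\le \frac{1}{2p_i}\Bigl(w_{ij}+v_{ij}\frac{d_j^2}{d_i^2}\Bigr),
\]
and summing over $j\ne i$ yields $\sum_{j\ne i}|a_{ij}|d_j/d_i<(w_{ii}+v_{ii})/(2p_i)\le -a_{ii}$. So the missing ingredient is not a sharper inequality but the intermediate use of Proposition~\ref{prop:pos-stab} on $\cV$ to manufacture the right scaling; no purely explicit substitution in terms of $w_{ii},v_{ii},p_i$ will do.
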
	
\begin{proof}
  	     If $\cM(A)$ is Hurwitz then according to Proposition~\ref{prop:pos-stab} it is diagonally stable and as a result $a_{i i}$ are negative. Furthermore, there exist positive $e_i$, $d_i$ such that $-a_{i i} d_i > \sum_{j=1,j\ne i}^n |a_{i j}| d_j$ and $-a_{i i} e_i > \sum_{j=1,j\ne i}^n |a_{j i}| e_j$. Now we can set $w_{i j} =  |a_{i j}| e_i d_j/d_i^2$, $v_{i j}= |a_{i j}| e_i/d_j$, $p_i = e_i/d_i$ and verify that $A$ satisfies~\eqref{eq:lin-cond}. Now let~\eqref{eq:lin-cond}  be fulfilled. Consider a strictly column diagonally dominant (and hence Hurwitz) Metzler matrix 
		\begin{gather*}
		\cV = \begin{cases}
		- v_{i i} & i = j \\
		v_{i j} & i\ne j
		\end{cases}.
		\end{gather*}
		This implies that there exist positive scalars $d_i$ such that $v_{i i} d_i^2 > \sum_{j=1,j\ne i}^n v_{i j} d_j^2$ according to Proposition~\ref{prop:pos-stab}. For any positive scalars $x$, $y$ we have $\sqrt{x y} \le (x + y)/2$, therefore
		\begin{multline*}
		\sum\limits_{j=1,j\ne i}^n |a_{i j}| d_j/d_i \le \sum\limits_{j=1,j\ne i}^n \frac{\sqrt{v_{i j} (d_j/d_i)^2 w_{i j}}}{p_i} \le \\
		\frac{1}{2 p_i} \sum\limits_{j=1,j\ne i}^n v_{i j} (d_j/d_i)^2 +w_{i j} < \frac{1}{2 p_i} (w_{i i} + v_{i i}) \le -a_{ii},
		\end{multline*}
		which shows that $\cM(A)$ is strictly row scaled diagonally dominant and hence Hurwitz.    
\end{proof}
\subsection{Decoupled Stability Tests}\label{ss:test}
According to Theorem~\ref{lem:lmi}, we can test stability of an $\alpha = \{k_1, \dots, k_n\}$-partitioned matrix $A$ using LMIs~\eqref{eq:gen-scalable-lmi}. These LMIs provide only a sufficient condition for stability, but they are decentralised in the sense that the semidefinite constraints are of orders $k_i$, and we do not need to impose a semidefinite constraint of order $\sum_{i =1}^n k_i$. We can fully decouple the stability tests by setting, for example, $V_{i j} = \gamma_{i j} I_{k_j}$ for some fixed $\gamma_{i j}$, while eliminating $W_{i j}$ using the Schur complement formula. We  get
\begin{multline}
P_i A_{i i}  + A_{i i}^\ast P_i + P_{i} \left(\sum\limits_{j=1,j\ne i}^n A_{i j} A_{i j}^\ast/\gamma_{i j}\right) P_i + \\
I_{k_i} (\varepsilon_i + \sum\limits_{j=1,j\ne i}^n \gamma_{j i}) = 0, \forall i =1,\dots, n.
\label{eq:decoupled-lmi}
\end{multline}
where $\varepsilon_i$ are positive predefined scalars. The choice of the gains $\gamma_{i j}$ is essential and we present a few ad-hoc choices. 

{\bf Test A.} The equations~\eqref{eq:decoupled-lmi} have solutions $P_i \succ 0$ with $\gamma_{i j} = \overline{\sigma}(A_{i j})$.

{\bf Test B.} The equations~\eqref{eq:decoupled-lmi} have solutions $P_i \succ 0$ with
\begin{gather*}
\gamma_{i j} = \begin{cases}
1 & \overline\sigma(A_{i j}) > 0, \\
0 & \overline\sigma(A_{i j}) = 0.
\end{cases}
\end{gather*}

{\bf Test C.} The matrix $\cM^\alpha(A)$ is Hurwitz, that is there exist positive scalars $e_i$, $d_i$ such that~\eqref{sdd-cond-1},~\eqref{sdd-cond-2} hold, which implies that the equations~\eqref{eq:decoupled-lmi} have solutions $P_i \succ 0$ with $\gamma_{i j} = \overline{\sigma}(A_{i j}) e_i/d_j$.

The sets of matrices satisfying these stability tests intersect, but none of them includes the other. It is possible to find matrices, which satisfy only one of tests and fail two others. This cannot be done in the trivial partition case (i.e., $\alpha = \{1,\dots, 1\}$), but in the two block case with $\alpha = \{2, 2\}$. we present such examples in what follows.

\section{Numerical Examples}\label{s:app}
\begin{figure*}
	\small 	\begin{equation}
	A = \left(\begin{array}{cc|cc}
	-67&   -30 &  2 & 8\\
	20 &  -27  &  2 &5\\
	\hline
	14 &  -10  & -57  &  40\\
	-3 &   10  &  50 &-27
	\end{array}\right)\,
	B = \left(\begin{array}{cc|cc}
	-30  &  30   &  0   & 2\\
	50   & -61 & -6 &-8\\
	\hline
	3 & -10   & -53 &-40 \\
	13 &  13 &  10   &-73
	\end{array} \right)\,C = \left(\begin{array}{cc|cc}
	-60  &  30  &  6  &  6    \\
	20   & -20   &      0 &   7 \\
	\hline
	7  &  2 &  -90   & 20   \\
	7  & -5 &    0  & -20  
	\end{array}\right) \label{examples}
	\end{equation} 
	\vspace{-20pt}
\end{figure*}
\emph{Example 1.} First, we present an example verifying the result of Proposition~\ref{prop:h-counter}. Let $\delta = 1.63$, $Q = \blkdiag{Q_1, Q_1}$
\begin{gather*}
A = \begin{pmatrix}
B & \delta I \\
\delta I & B 
\end{pmatrix},\,\, B = \begin{pmatrix}
-8 & 8\\
5 & -8
\end{pmatrix},\,\,Q_1 = \begin{pmatrix}
7 & 7\\ 7 & 11 \end{pmatrix}.
\end{gather*}
With $\alpha = \{2,2\}$, the $\cM^{\alpha}(A)$ is Hurwitz, and the matrix $Q A +  A^\ast Q$ is negative definite. We can verify if there exists $P = \blkdiag{P_1, P_2} \succ 0$ such that $\cM^\alpha(P A + A^\ast P)$ is a Hurwitz matrix using LMIs. In particular, it can be shown that $\|(sI - P A - A^\ast P)^{-1}\|_\Hinf^{-1} = \underline{\sigma}(P A + A^\ast P)$, hence we have the following matrix inequalities:
\begin{gather*}
P_{i} A_{i i} + A_{i i}^\ast P_i \preceq -\gamma_{ii} I_{k_i}, i = 1,2\\
\begin{pmatrix}
\gamma_{1 2} I_{k_1} & P_{1} A_{1 2} + A_{2 1}^\ast P_2 \\
P_{2} A_{2 1} + A_{1 2}^\ast P_1 & \gamma_{1 2} I_{k_2}
\end{pmatrix} \succeq 0, \\
\cB =\begin{pmatrix}
-\gamma_{1 1} & \gamma_{1 2}\\
\gamma_{1 2} & -\gamma_{2 2}
\end{pmatrix} \prec 0,
\end{gather*}
where $\cB \prec 0$ is equivalent to $\cB$ being Hurwitz for symmetric matrix $\cB$. Numerical computations show that there exists no $P = \blkdiag{P_1, P_2} \succ 0$ such that $\cM^\alpha(P A + A^\ast P)$ is Hurwitz. But if we set $\delta = 1.6$, then it is straightforward to check that $\cM^\alpha(Q A + A^\ast Q)$ is Hurwitz.

\emph{Example 2. }  Consider matrices $A$, $B$ and $C$ in~\eqref{examples}. It can be verified that the matrix $A$ satisfies Test~A and fails Tests~B,C, the matrix $B$ satisfies Test~B and fails Tests~A,C and finally the matrix $C$ satisfies Test~C, while fails Tests~A,B.

Test B uses binary information about the interconnections (if they exist or not), while Test~A uses also the information on the gains of the interconnections. Therefore, it may seem counter-intuitive that Test A sometimes fails when Test B prevails, since in Test A we use more information about the system than in Test B. In our examples, if the gain $\overline{\sigma}(A_{1 2})$ is larger than one, then we make the Riccati equation for $i =1$ less conservative by normalising $A_{1 2}$. However, we make the other Riccati equation (with $i=2$) more conservative by increasing the term $\gamma_{1 2} I$, which requires to make the $\Hinf$ norm of the system $(s I -A_{2 2})^{-1} A_{2 1}$ smaller. Therefore, for large gains $A_{i j}$ either of the tests can perform better depending on the drift matrices $A_{i i}$. 

The main conservatism of Test C is in the transition to the Riccati equations. We use the Cauchy-Schwartz inequality, which may be conservative if the eigenspaces of the matrices $A_{i i}$, $A_{i j}$ are not aligned. In control-theoretic language, this corresponds to the mode of $A_{i i}$ closest to the imaginary axis being poorly controllable using the input matrix $A_{i j}$. On the other hand, we also scale the gains $\overline{\sigma}(A_{i j})$, which provides extra freedom. 

In our examples, these limitations of the tests are not apparent, which indicates that even the slightest changes in the gains $\overline{\sigma}(A_{i j})$ and the eigenspaces of $A_{i j}$ can result in the failure of one of the completely decoupled tests. 

\emph{Example 3.} We proceed with a rather theoretical observation. It is well-known that an $\alpha$-triangular matrix $A$ is Hurwitz if and only if the blocks on $\alpha$-diagonal are Hurwitz, which also implies that it is $\alpha$-diagonally stable. We will consider this class of matrices through our generalisation of scaled diagonal dominance on a specific example. Let
\begin{gather*}
A = 
\begin{pmatrix}
    -6  &   4  &   0  &   0  &   0  &   0 \\
    8   & -7   &  0   &  0   &  0   &  0 \\
    4   &  6   & -1   & -2   &  4   &  0 \\
    7   & -2   & 3    & -1   &  6   &  0 \\
    1   &  2   &  1   &  0   & -7   &  0 \\
    -1  &   7   &  4  &   6  &  -5  &  -2
\end{pmatrix}.
\end{gather*}
While setting $\alpha = \{2, 3, 1\}$, compute the comparison matrix
\begin{gather*}
\cM^\alpha(A) = \begin{pmatrix}
-0.7799&         0 &        0 \\
8.4427 &  -0.5282  &       0 \\
7.0711 &   8.7750  & -2.0000
\end{pmatrix}, 
\end{gather*}
which is Hurwitz as long as elements on the diagonal are negative. It can be verified that the generalisation of the scaled diagonal dominant matrices from~\cite{feingold1962block,sootla2016existence} will not yield conclusive results on stability analysis of the matrix $A$. Our definition, however, confirms that stability of the blocks $A_{i i}$ (and hence stability of $\cM^\alpha(A)$) is necessary and sufficient for $\alpha$-diagonal stability of an $\alpha$-triangular matrix $A$.

\emph{Example 4. }
Now we consider another class of matrices called border block diagonal~\cite{vsiljak1978large}. Consider the following $\alpha$-partitioned matrix
\begin{gather*}
\small A = \begin{pmatrix}
A_{1 1} & A_{1 2} &  A_{1 3} & \cdots & A_{1 n} \\
A_{2 1} & A_{2 2} &  0       & \cdots & 0       \\
A_{3 1} & 0       &\ddots    & \ddots & \vdots  \\
\vdots  & \vdots  &\ddots    & \ddots & 0       \\
A_{n 1} & 0       &\cdots    & 0      & A_{n n}
\end{pmatrix}.
\end{gather*}
If the matrix $A$ satisfies~\eqref{eq:gen-scalable-lmi} then the conditions are simplified since $A_{i j} = 0$ unless $i =j$, $i =1$ or $j = 1$. We can set directly $V_{i j} =0$, $W_{i j} =0$ if $A_{ij} = 0$, furthermore, we sum LMIs in~\eqref{eq:gen-scalable-lmib} containing $W_{1 i}$ and $W_{i 1}$ for every $i$, after rearranging the LMIs to fit the dimensions. After these operations, it can be shown that conditions~\eqref{eq:gen-scalable-lmi} imply the following LMIs
\begin{gather}
\begin{gathered}
Q_1 A_{1 1} + A_{1 1}^\ast Q_1 +\sum\limits_{j > 1} Y_{j} \prec 0 \\
Q_j A_{j j} + A_{j j}^\ast Q_j + Z_j \preceq 0, j >1 \\
\begin{pmatrix}
Y_{j}                 & -Q_1 A_{1 j} - A_{j 1}^\ast P_j\\
- A_{1 j}^\ast Q_1 -  Q_j A_{j 1}  & Z_j
\end{pmatrix} \succ 0, j >1, 
\end{gathered}\label{con:bbd}
\end{gather}
where we set $Y_j = W_{1 j} + V_{j 1}$, $Z_j = W_{j 1} + V_{1 j}$. These conditions can be obtained directly from the LMI $Q A + A^\ast Q\prec 0$ provided that $Q= \diag{Q_1,\dots, Q_n}\succ 0$ using standard decomposition techniques (cf.~\cite{mason2014chordal},\cite{vandenberghe2015chordal}). Therefore, conditions~\eqref{con:bbd} are necessary and sufficient for $\alpha$-diagonal stability of border block diagonal matrices. Conditions~\eqref{con:bbd} are less restrictive than stability of $\cM^\alpha(A)$ and conditions~\eqref{eq:gen-scalable-lmi} applied to the matrix $A$, at the same time one can view conditions~\eqref{con:bbd} as conditions~\eqref{eq:gen-scalable-lmi} applied to $Q A + A^\ast Q$ with $P_i = I$. 

\section{Conclusion} \label{s:con}
In this paper, we presented a generalisation of scaled diagonal dominance for block partitioned matrices. Our main goal was to provide conditions on the drift matrix, which facilitate the stability analysis of large-scale systems, in the spirit of positive systems theory. In particular, we derived sufficient conditions for existence of block-diagonal solutions to Lyapunov inequalities. We have already noted the similarity of our work to dissipativity theory by pointing out the relation to~\cite{cook1974stability}. In addition to stability results in~\cite{cook1974stability}, we explicitly constructed Lyapunov inequalities and decoupled the stability test into a number of LMIs, which can potentially be used for distributed stability analysis. For example, one can use decomposition techniques similar to~\cite{zheng2016fast}, in order to derive scalable optimisation algorithms. Furthermore, our results can be applied to decentralised control problems as indicated in~\cite{zheng2017blockdiagonal}.
\bibliography{bsdd}
\section*{Appendix} 
\begin{proof-of}{Proposition~\ref{prop:stab-comp}}
We note that the proof of the following result employs the technique used in~\cite{feingold1962block}. We prove the result by contradiction. Let $A$ have eigenvalues with a nonnegative real part and let $\cM^\alpha(A)$ be Hurwitz, which implies there exists positive scalars $d_i$ such that~\eqref{sdd-cond-1} holds for every $i$. Since $A$ has eigenvalues with a nonnegative real part, then so does the matrix $D^{-1} A D$ is with $D = \diag{d_1 I_{k_1}, \dots , d_n I_{k_n}}$. Let $\lambda$ be the eigenvalue of $D^{-1} A D$ with a nonnegative real part. By Proposition~\ref{prop:block-gershgorin} there exists an index $i$ such that 
	\begin{gather} \label{block-gershgorin}
	\|(\lambda I - A_{i i})^{-1}\|_2^{-1} \le \sum\limits_{j=1,j\ne i}^n \left\|A_{i j}\frac{d_j}{d_i}\right\|_2 = \sum\limits_{j=1,j\ne i}^n \|A_{i j}\|_2 \frac{d_j}{d_i}.
	\end{gather}
	However, $\|(\lambda I - A_{i i})^{-1}\|_2 \le \|(s I - A_{i i})^{-1}\|_\Hinf$ for any $\lambda$ such that $\Re(\lambda) \ge 0$. Combining~\eqref{sdd-cond-1} and~\eqref{block-gershgorin} gives:
	\begin{multline*}
	\sum\limits_{j=1,j\ne i}^n \|A_{i j}\|_2 \frac{d_j}{d_i} \ge 
	\|(\lambda I - A_{i i})^{-1}\|_2^{-1}\ge \\
	\|(s I - A_{i i})^{-1}\|_\Hinf^{-1} >\sum\limits_{j=1,j\ne i}^n  \|A_{i j}\|_2 \frac{d_j}{d_i}.
	\end{multline*}
	We arrive at the contradiction and complete the proof.
\end{proof-of}
\begin{proof-of}{Proposition~\ref{prop:h-counter}}
	Consider the matrix $A$ with
	\begin{gather*}
	A = \begin{pmatrix}
	B & \delta I \\ \delta I & B
	\end{pmatrix}
	\end{gather*} 
	with $\delta > 0$, Hurwitz matrix $B\in\R^{k\times k}$ such that
	
	(a) $\underline{\sigma}(B) \ge 1$
	
	(b) $\overline{\sigma}(X_0) \delta \ge 1/2$, where $X_0$ is a solution to $X_0 B^\ast + B X_0 + I =0$.
	
	(c) the matrix $\cM^\alpha(A)$ is Hurwitz with $\alpha = \{k ,k\}$.
	
	Such a matrix exists if $B\in\R^{k\times k}$ with $k\ge 2$. For example, let $B = \begin{pmatrix}
	-8 & 8\\
	5 & -8
	\end{pmatrix}$ and $\delta = 1.63$.
	
	We will show that under assumptions (a)-(c) there does not exist an $\alpha$-diagonal matrix $X\succ 0$ such that the matrix $\cM^\alpha(A^\ast X + X A)$ is Hurwitz. The matrix $\cM^\alpha(A^\ast X + X A)$ is Hurwitz if and only if there exist $X_1\succ 0$, $X_2\succ 0$, $\gamma$ such that
	\begin{gather*}
	B^\ast X_1 + X_1 B \prec 0, \,\,\,	\underline{\sigma}(B^\ast X_1 + X_1 B) >  \overline{\sigma}(X_1 + X_2) \gamma \delta, \\
	B^\ast X_2 + X_2 B \prec 0, \,\,\,	\underline{\sigma}(B^\ast X_2 + X_2 B)\gamma > \overline{\sigma}(X_1 + X_2) \delta, 
	\end{gather*}
	which implies that 
	\begin{gather*}
	B^\ast X_1 + X_1 B +  \overline{\sigma}(X_1 + X_2) I \delta \gamma \prec 0,  \\
	B^\ast X_2 + X_2 B +  \overline{\sigma}(X_1 + X_2) I \delta/\gamma \prec 0. 
	\end{gather*}
	It can be verified that $P = \overline{\sigma}(X_1 + X_2) X_0 (\gamma +1/\gamma)\delta$ satisfies 
	\begin{gather*}
	B^\ast P + P B +  \overline{\sigma}(X_1 + X_2) I (\gamma +1/\gamma)\delta= 0,
	\end{gather*}
    and $X_1 +X_2 \succ P$. It follows that 
	\begin{gather*}
	\overline{\sigma}(X_1 + X_2) > \overline{\sigma}(X_0) \overline{\sigma}(X_1 + X_2) (\gamma + 1/\gamma )\delta ,
	\end{gather*}
	which cannot be fulfilled since $\overline{\sigma}(X_0)  (\gamma + 1/\gamma )\delta \ge 1$ for all $\gamma>0$ due to b). Indeed,
	\begin{gather*}
	\gamma^2 - \gamma/(\delta \overline{\sigma}(X_0)) + 1\ge 0 \Leftrightarrow\\
	(\gamma - 1/(2 \delta \overline{\sigma}(X_0)))^2 + 1 - 1/(2 \delta \overline{\sigma}(X_0))^2 \ge 0 \Leftarrow\\
	1 - 1/(2 \delta \overline{\sigma}(X_0))^2 \ge 0 \Leftrightarrow  \delta \overline{\sigma}(X_0) \ge 1/2.
	\end{gather*}
	This completes the proof.
	\end{proof-of}
	\begin{rem}
		The proof of Proposition~\ref{prop:h-counter} holds when $B \in\R^{k\times k}$ with $k \ge 2$. Indeed, if $b$ is a positive scalar then $b = \underline{\sigma}(b)$, $X_0 = -1/(2 b)$ therefore we need to pick
		$\delta \ge b$. However, 
		$A =  \begin{pmatrix}
		-b & \delta \\ 
		\delta  & -b
		\end{pmatrix}$ has a positive eigenvalue if $\delta > b$ and has an eigenvalue at the origin if $\delta = b$. Hence, no two by two matrix can satisfy the conditions in the proof of Proposition~\ref{prop:h-counter}.
	\end{rem}
\end{document}